\documentclass[conference]{IEEEtran}
\IEEEoverridecommandlockouts
% The preceding line is only needed to identify funding in the first footnote. If that is unneeded, please comment it out.
\usepackage{mathtools}	% loads amsmath
\usepackage{amssymb}
\usepackage{amsthm}

\usepackage[utf8]{inputenc}

\usepackage{ifthen}

% char­ac­ter pro­tru­sion and font ex­pan­sion, fur­ther­more the ad­just­ment of in­ter­word
% spac­ing and ad­di­tional kern­ing, as well as hy­phen­at­able let­terspac­ing (track­ing)
% and the pos­si­bil­ity to dis­able all or se­lected lig­a­tures.
\usepackage{microtype}

% customizing captions in floating environments
\usepackage{caption}

% bold letters
\usepackage{bm}

\usepackage[draft,nomargin,marginclue,inline]{fixme}
% draft mode: creates no compilation aborts, final mode does
% nomargin: do not show note in margin
% marginclue: indicate in margin that there is a note
% inline: display note inline
	% make inline note red
\makeatletter
\renewcommand*\FXLayoutMarginClue[3]{%
	\marginpar[%
	\raggedleft\@fxuseface{margin}\textcolor{red}{\ignorespaces $ \Rightarrow $}]{%
		\raggedright\@fxuseface{margin}\textcolor{red}{\ignorespaces $ \Leftarrow $}}}
\makeatother	% change color and displayed text in margin
% for twocolumn: increase width of margin notes - this fixes problem with left \fixme
%\setlength{\marginparwidth}{24pt}

\usepackage{tumcolor}

\usepackage{pgfplotstable}	% loads pgfplots, tikz
% tikz loads xcolor, graphicx, ...

% to load CSV data
% use: 	discard if={<column name>}{<value>}
%		discard if not={<column name>}{<value>}
\pgfplotsset{
	discard if/.style 2 args={
		x filter/.append code={
			\edef\tempa{\thisrow{#1}}
			\edef\tempb{#2}
			\ifx\tempa\tempb
			
			\fi
		}
	},
	discard if not/.style 2 args={
		x filter/.append code={
			\edef\tempa{\thisrow{#1}}
			\edef\tempb{#2}
			\ifx\tempa\tempb
			\else
			
			\fi
		}
	}
}

\usepackage{cite}
\usepackage{hyperref}
% some of these options need the package xcolor
\hypersetup{
	colorlinks = true,
	% border around links like \eqref
	linkbordercolor = {white},
	% color of numbers in \cite
	citecolor = {black},
	% color of letters in links like \eqref
	linkcolor = {black},
	% color of letters in url
	urlcolor = {black}
}
% starred ref environment
\makeatletter
\renewcommand*{\eqref}[1]{%
	\hyperref[{#1}]{\textup{\tagform@{\ref*{#1}}}}%
}
\makeatother

% supports acronyms and multiple glossaries
% use \acp for plural
\usepackage[acronym,shortcuts]{glossaries}
\newacronym{blmmse}{BLMMSE}{Bussgang LMMSE}
\newacronym{cnn}{CNN}{convolutional neural network}
\newacronym{dft}{DFT}{discrete Fourier transform}
\newacronym{em}{EM}{expectation-maximization}
\newacronym{emiht}{EM-IHT}{EM algorithm with IHT}
\newacronym{iht}{IHT}{iterative hard thresholding}
%\newacronym{lmmse}{LMMSE}{linear minimum mean square error}
\newacronym{mimo}{MIMO}{multiple-input multiple-output}
\newacronym{mmse}{MMSE}{minimum mean square error}
\newacronym{mse}{MSE}{mean square error}
\newacronym{relu}{ReLU}{rectified linear unit}
\newacronym{snr}{SNR}{signal-to-noise ratio}
\newacronym{adc}{ADC}{analog-to-digital converter}
\newacronym{nn}{NN}{neural network}
\newacronym{re}{RE}{resource element}

%new commands
\newcommand{\op}[1]{{\operatorname{#1}}}
\newcommand{\uproman}[1]{\uppercase\expandafter{\romannumeral#1}}

\newcommand{\diag}{\operatorname{diag}}

\newcommand{\eye}{\bm{\op{I}}}
\newcommand{\B}[1]{\bm{#1}}
\newcommand{\inv}{^{-1}}

\newcommand{\h}{^{\op H}}

% intelligent cross-referencing
\usepackage{cleveref}

% for better enumerate
\usepackage{enumitem}

% standardize plot size
%\newcommand{\plotheight}{0.8\columnwidth}
%\newcommand{\plotwidth}{0.95\columnwidth}

% example plot pattern
%\newcommand{\lineWidth}{1.0pt}
\tikzset{algorithm1/.style={mark options={solid},color=TUMBeamerBlue, line width=\lineWidth, mark=square, dashed}}

%% use:
%% cycle list name=miko
%% as a begin{axis}[] option
%\pgfplotscreateplotcyclelist{miko}{%
%	{color=TUMBeamerRed, dash pattern=on 5pt off 1pt on 1pt off 1pt, line width=1.5pt},%
%	{color=TUMBeamerOrange, dash pattern=on 5pt off 2pt, line width=1.5pt,
%		mark=o},%
%	{color=black, dash pattern=on 5pt off 2pt on 3pt off 2pt, line width=1.5pt},%
%	{color=TUMDarkerBlue, line width=1.5pt},%
%	{color=black, dotted, line width=1.5pt},%
%	{color=TUMBeamerGreen, dotted, line width=1.5pt},%
%	{color=TUMBeamerYellow, dotted, line width=1.5pt},%
%	{color=TUMBeamerDarkRed, dotted, line width=1.5pt},%
%	{color=TUMBeamerLightGreen, dotted, line width=1.5pt},%
%	{color=TUMIvory, dotted, line width=1.5pt},%
%	{color=TUMLightBlue, dotted, line width=1.5pt},%
%}

% math operators

%\DeclareMathOperator{\diag}{diag}

%\DeclareMathOperator{\tr}{tr}

%\DeclareMathOperator{\vect}{vec}
% use: \ceil*{x}

% new commands
	% calligraphic font
	% fraktur font

\newcommand*{\C}{\mathbb{C}}

% comments in text

% align stackrel correctly
\newlength{\leftstackrelawd}
\newlength{\leftstackrelbwd}
\def\leftstackrel#1#2{\settowidth{\leftstackrelawd}%
	{${{}^{#1}}$}\settowidth{\leftstackrelbwd}{$#2$}%
	\addtolength{\leftstackrelawd}{-\leftstackrelbwd}%
	\leavevmode\ifthenelse{\lengthtest{\leftstackrelawd>0pt}}%
	{\kern-.5\leftstackrelawd}{}\mathrel{\mathop{#2}\limits^{#1}}}

% bold letters
%	julia code:
%	for i in vcat('A':'Z','a':'z')
%  		println("\\newcommand{\\mb", i, "}{\\bm{", i, "}}")
% 	end

\newcommand{\mbn}{\bm{n}}

\newcommand{\mbs}{\bm{s}}

\newcommand{\mby}{\bm{y}}

\usepackage{algorithmic}
\usepackage{graphicx}
\pgfplotsset{compat=1.15}
\usepgfplotslibrary{statistics}
\usetikzlibrary{patterns}
%\usetikzlibrary{calc,fit,calc,plotmarks,external,patterns,shapes,arrows,decorations.markings,spy}
\usetikzlibrary{positioning}

\usepackage{siunitx}
\usepackage{tabularx}
\usepackage{subcaption}
\usepackage{balance}

\captionsetup[figure]{font=scriptsize,labelfont=scriptsize}
\captionsetup[table]{font=scriptsize,labelfont=scriptsize}

\newacronym{gmm}{GMM}{Gaussian mixture model}
\newacronym{pdf}{PDF}{probability density function}
\newacronym{cme}{CME}{conditional mean estimator}
\newacronym{siso}{SISO}{single-input single-output}
\newacronym{fft}{FFT}{fast Fourier transform}
\newacronym{ofdm}{OFDM}{orthogonal frequency-division multiplexing}
\newacronym{ml}{ML}{machine learning}
\newacronym{mt}{MT}{mobile terminal}
\newacronym{bs}{BS}{base station}
\newacronym{ul}{UL}{uplink}
\newacronym{dl}{DL}{downlink}
\newacronym{fdd}{FDD}{frequency division duplex}
\newacronym{tdd}{TDD}{time division duplex}
\newacronym{ce}{CE}{channel estimation}
\newacronym{los}{LOS}{line of sight}
\newacronym{nlos}{NLOS}{non-line of sight}
\newacronym{o2i}{O2I}{outdoor-to-indoor}
\newacronym{uma}{UMa}{urban macrocell}
\newacronym{umi}{UMi}{urban microcell}
\newacronym{mfa}{MFA}{mixtures of factor analyzers}
\newacronym{fa}{FA}{factor analysis}
\newacronym{ura}{URA}{uniform rectangular array}
\newacronym{lte}{LTE}{Long Term Evolution}
\newacronym{awgn}{AWGN}{additive white Gaussian noise}
\newacronym{vae}{VAE}{variational autoencoder}
\newacronym{pca}{PCA}{principle component analysis}
\newacronym{hmm}{HMM}{hidden Markov model}
\newacronym{gan}{GAN}{generative adversarial network}
\newacronym{csi}{CSI}{channel state information}
\newacronym{simo}{SIMO}{single-input multiple-output}
\newacronym{ls}{LS}{least squares}
\newacronym{omp}{OMP}{orthogonal matching pursuit}
\newacronym{lmmse}{LMMSE}{linear \ac{mmse}}
\newacronym{ris}{RIS}{reconfigurable intelligent surface}

\def\BibTeX{{\rm B\kern-.05em{\sc i\kern-.025em b}\kern-.08em
    T\kern-.1667em\lower.7ex\hbox{E}\kern-.125emX}}

% tikz
\newcommand{\plotwidth}{1\columnwidth}

% line styles
\newcommand{\lineWidth}{1.2pt}
\newcommand{\marksize}{1.6pt}
\pgfplotsset{tick label style={font=\small},label style={font=\small},legend style={font=\scriptsize}}

% colors
\definecolor{myblack}{RGB}{70,70,70}
\definecolor{myblue}{RGB}{65,105,225}
\definecolor{mygreen}{RGB}{0,139,139}
\definecolor{myorange}{RGB}{255,150,0}
\definecolor{myred}{RGB}{255,69,0}
\definecolor{mylila}{RGB}{153,50,204}

% legend entries

\tikzset{ls/.style={mark options={solid},color=black, line width=\lineWidth}}
\tikzset{sampcov/.style={mark options={solid},color=gray, line width=\lineWidth, mark=triangle, mark size=\marksize}}
\tikzset{omp/.style={mark options={solid},color=orange, line width=\lineWidth, mark=pentagon, mark size=\marksize}}
\tikzset{cnn/.style={mark options={solid},color=TUMBeamerRed, line width=\lineWidth, mark=square, mark size=\marksize,dotted}}
\tikzset{gmm/.style={mark options={solid},color=TUMBeamerGreen, line width=\lineWidth, mark=diamond, mark size=\marksize}}
\tikzset{gmm_toep/.style={mark options={solid},color=TUMBeamerGreen, line width=\lineWidth, mark=diamond, mark size=\marksize,dashed}}
\tikzset{gmm_circ/.style={mark options={solid},color=TUMBeamerGreen, line width=\lineWidth, mark=diamond, mark size=\marksize,dotted}}
\tikzset{mfa/.style={mark options={solid},color=TUMBlue, line width=\lineWidth, mark=o, mark size=\marksize}}
\tikzset{mfa2/.style={mark options={solid},color=TUMBlue, line width=\lineWidth, mark=o, mark size=\marksize,dashed}}
\tikzset{mfa3/.style={mark options={solid},color=TUMBlue, line width=\lineWidth, mark=o, mark size=\marksize,dotted}}

\newtheorem{corollary}{Corollary}

%\usetikzlibrary{external}
%\tikzset{external/system call={latex \tikzexternalcheckshellescape -halt-on-error
%				-interaction=batchmode -jobname "\image" "\texsource";
%				dvips -o "\image".ps "\image".dvi;
%				ps2eps "\image.ps"}}
%\tikzexternalize

\begin{document}
\bstctlcite{IEEEexample:BSTcontrol}

\title{Low-Rank Structured MMSE Channel Estimation with Mixtures of Factor Analyzers
\thanks{The authors acknowledge the financial support by the Federal Ministry of
Education and Research of Germany in the program of ``Souver\"an. Digital.
Vernetzt.''. Joint project 6G-life, project identification number: 16KISK002.}
}
\author{
	\centerline{Benedikt Fesl, Nurettin Turan, and Wolfgang Utschick}\\
	\IEEEauthorblockA{School of Computation, Information and Technology, Technical University of Munich, Germany\\
	Email: \{benedikt.fesl, nurettin.turan, utschick\}@tum.de
    }
}

\maketitle

\begin{abstract}
    This work proposes a generative modeling-aided channel estimator based on \ac{mfa}.
    %This work proposes the \ac{mfa} as a generative model that is leveraged to perform channel estimation. 
    %In a first offline step, the model parameters are inferred via the \ac{em} algorithm for the purpose of learning the underlying channel distribution of a whole communications scenario inside a \ac{bs} cell. 
    In an offline step, the parameters of the generative model are inferred via an \ac{em} algorithm in order to learn the underlying channel distribution of a whole communication scenario inside a \ac{bs} cell.
    Thereby, the wireless channels are effectively modeled on a piecewise linear subspace which is achieved by the low-rank structure of the learned covariances of the \ac{mfa}. This suits the low-rank structure of wireless channels at high frequencies and additionally saves parameters and prevents overfitting.
    %Afterwards, the trained model is used online to perform channel estimation. With the \ac{mfa}, the \ac{mt}’s local distribution is modeled as a conditionally Gaussian mixture resulting in a closed form expression of the estimator, which even allows for parallelization. The proposed estimator asymptotically converges to the \ac{mmse} estimator.
    Afterwards, the trained \ac{mfa} model is used online to perform channel estimation with a closed-form solution of the estimator which asymptotically converges to the \ac{mmse} estimator.
    Numerical results based on real-world measurements demonstrate the great potential of the proposed approach for channel estimation.
\end{abstract}

\begin{IEEEkeywords}
Mixtures of factor analyzers, channel estimation, low-complexity, variational inference, machine learning.
\end{IEEEkeywords}

\section{Introduction}
The concept of generative models has been existing for a long time.~\cite{Roweis1999}. 
One prominent example is the \ac{gmm}, which has the ability for universal approximation \cite{NgNgChMc20}.
%A prominent representative is the \ac{gmm} which exhibits the universal approximation ability \cite{NgNgChMc20}. 
With the advent of deep learning, generative models based on neural networks, such as \acp{vae} \cite{Kingma2014}, have become highly successful.
%With the breakthrough of deep learning, neural network-based generative models, e.g., 
%\acp{gan} \cite{NIPS2014_5ca3e9b1} or 
%\acp{vae} \cite{Kingma2014}, became a great success. 
Therefore, generative models play a key role in modern signal processing and wireless communication applications, particularly when domain and system knowledge is incorporated to solve inference tasks \cite{10056957}.
In this respect, the requirements for accurate channel estimation in the next generation of cellular systems (6G) are of significant interest \cite{9598915}.
%Of particular interest are the requirements for accurate channel estimation in the future generation of cellular systems (6G) \cite{9598915}.

In recent works, both \acp{gmm} and \acp{vae} were leveraged to learn the underlying channel distribution of a whole communication scenario and to utilize this information to yield a tractable prior information for channel estimation, showing great improvements over state-of-the-art approaches \cite{9842343,10051858}. The advantages are the possibilities to incorporate structural features and even learn from imperfect data \cite{10051921,10078293}.
A key feature of both approaches is the parameterization of the local distribution of the \acp{mt} inside a \ac{bs} cell in a tractable manner by the learned parameters. This allows to utilize closed-form solutions for the estimation task. %which is motivated by the system model.

Although these methods show promising results, they face some challenges which can potentially hinder the application in real systems.
%under specific circumstances. 
First, the number of learned parameters is high, entailing demanding memory requirements.
%, especially if the learned model has to be offloaded to the \acp{mt}. 
For instance, the number of parameters scales quadratically in the number of antennas for the \ac{gmm}; Likewise, the \ac{vae} is comprised of deep \acp{nn} with typically even more parameters. 
While structural features imposed by antenna arrays can reduce the number of parameters, mutual coupling between the antennas or other hardware imperfections can corrupt this structure in real systems.
%Although structural features, which are imposed by the antenna arrays, can be utilized to reduce the number of parameters, this structure might be corrupted by mutual coupling between the antenna elements or by other hardware imperfections in real systems. 
%In contrast, there exist different properties such as low-rank structures which are not incorporated into the design of the generative models up to now \cite{7727995}.
Second, the generative abilities of \acp{gmm} are known to be somewhat limited due to the discrete nature of the latent space, whereas \acp{vae} are generally lacking interpretability due to the elaborate design of nested nonlinearities. 
%Therefore, it seems vital to investigate further approaches which exhibit similar advantages while tackling the mentioned difficulties.

A powerful generative model that is related to \acp{gmm} and \acp{vae} is the \ac{mfa} model, which contains both a discrete and continuous latent variable, where the latter is modeled Gaussian just as in the \ac{vae} \cite[Ch. 12]{Murphy2012}, \cite{em_mfa}. 
This motivated the usage of \ac{mfa} for several applications, e.g. \cite{deepMFA,8645457}.
In contrast to the \ac{vae} with a nonlinear latent space, the latent space of the \ac{mfa} is piecewise linear with tractable expressions for the inference of the latent samples. From a different perspective, the \ac{mfa} model can be interpreted as a \ac{gmm} with low-rank structured covariances \cite[Ch. 12]{Murphy2012}. This results in having less parameters and being less prone to overfitting, thereby matching the structural features of channels at high frequencies \cite{7727995}. 
Altogether, the \ac{mfa} model has the potential to excellently match the requirements for accurate channel estimation in 6G systems with low~overhead.

\textit{Contributions:}
We propose to employ the \ac{mfa} model to learn the unknown underlying channel distribution of a whole \ac{bs} cell with low-rank structured covariances which effectively models the channel distribution on a piecewise linear subspace. 
%By parameterizing the local distribution of \ac{mt}'s channels as conditionally Gaussian, 
The resulting channel estimator can be computed in closed form by a convex combination of \ac{lmmse} estimates, which asymptotically converges to the generally intractable \ac{mse}-optimal solution. 
We validate the effectiveness of the approach through simulation results based on real-world measurement data, which demonstrate that the MFA achieve great results in channel estimation performance.

\section{System Model and Measurement Campaign}\label{sec:system_model}
We consider a \ac{simo} communication scenario where the \ac{bs} equipped with $N$ antennas receives uplink training signals from a single-antenna \ac{mt}. In particular, at
the \ac{bs}, after decorrelating the pilot signal, noisy observations of the form 
\begin{equation}
    \B y = \B h + \B n \in \mathbb{C}^N
    \label{eq:system_model}
\end{equation}
are received where the channel $\B h \in \mathbb{C}^N$, which follows an unknown distribution $p(\B h)$, is corrupted by \ac{awgn} $\B n \sim \mathcal{N}_{\mathbb{C}}(\B 0 , \sigma^2\eye)$.
Although the underlying channel distribution $p(\B h)$ is unknown, we assume the availability of a training dataset $\mathcal{H} = \{\B h_t\}_{t=1}^T$ of $T$ channel samples that represent the channel distribution of the whole \ac{bs} cell. A common practice is to use simulation tools which are based on sophisticated models of the underlying communication scenario to generate a dataset, however, these models do not fully capture the characteristics of the real world.
%are generally built upon unrealistic assumptions. 
Therefore, we use real-world data from a measurement campaign which is described in the following.

The measurement campaign was conducted at the Nokia campus in Stuttgart, Germany, in October/November 2017.
As can be seen in Fig. \ref{fig:meas_campaign}, the receive antenna with a down-tilt of $\SI{10}{\degree}$ was mounted on a rooftop about $\SI{20}{m}$ above the ground and comprises a \ac{ura} with $N_v=4$ vertical and $N_h=16$ horizontal single polarized patch antennas.
The horizontal spacing is $\lambda/2$ and the vertical spacing equals $\lambda$, where the geometry of the \ac{bs} antenna array was adapted to the \ac{umi} propagation scenario.
%, which exhibited a larger horizontal than vertical angular spread.
The carrier frequency is $\SI{2.18}{\giga\hertz}$.
The \ac{bs} transmitted time-frequency orthogonal pilots using $\SI{10}{\mega\hertz}$ \ac{ofdm} waveforms.
In particular, $600$ sub-carriers with $\SI{15}{\kilo\hertz}$ spacing were used, which resembles typical \ac{lte} numerology.
The pilots were sent continuously with a periodicity of $\SI{0.5}{ms}$ and were arranged in $50$ separate subbands, with $12$ consecutive subcarriers each, for channel sounding purposes.
For the duration of one pilot burst the propagation channel was assumed to remain constant.
A single monopole receive antenna, which mimics the \ac{mt}, was mounted on top of a moving vehicle at a height of $\SI{1.5}{m}$.
The maximum speed was $\SI{25}{kmph}$.
Synchronization between the transmitter and receiver was achieved using GPS.
The data was collected by a TSMW receiver and stored on a Rohde \& Schwarz IQR hard disk recorder.
In a post-processing step, by the correlation of the received signal with the pilot sequence a channel realization vector with $N=N_v N_h$ coefficients per subband was extracted.

The measurement was conducted at a high \ac{snr}, which ranged from $\SI{20}{dB}$ to $\SI{30}{dB}$.
Thus, the measured channels are regarded as ground truth.
%Further, we assume fully calibrated antennas and thus channel reciprocity is assumed. %Accordingly, the \ac{UL} is calculated as $\mbh^\herm$.
In this work, we will therefore consider a system where we artificially corrupt the measured channels with \ac{awgn} at specific \acp{snr} and thereby obtain noisy observations $\B y = \B h + \B n$.
%The task is then to denoise the observations and obtain an estimated channel $\hat{\B h}$.
We note that we investigate a single-snapshot scenario, i.e., the
coherence interval of the covariance matrix and of the channel is identical. %(the channel covariance matrix changes at the same time scale as the channel).

\begin{figure}[t]
    \centering
    \begin{tikzpicture}
        \draw (0,0) node[below right] {\includegraphics[width=0.9\columnwidth]{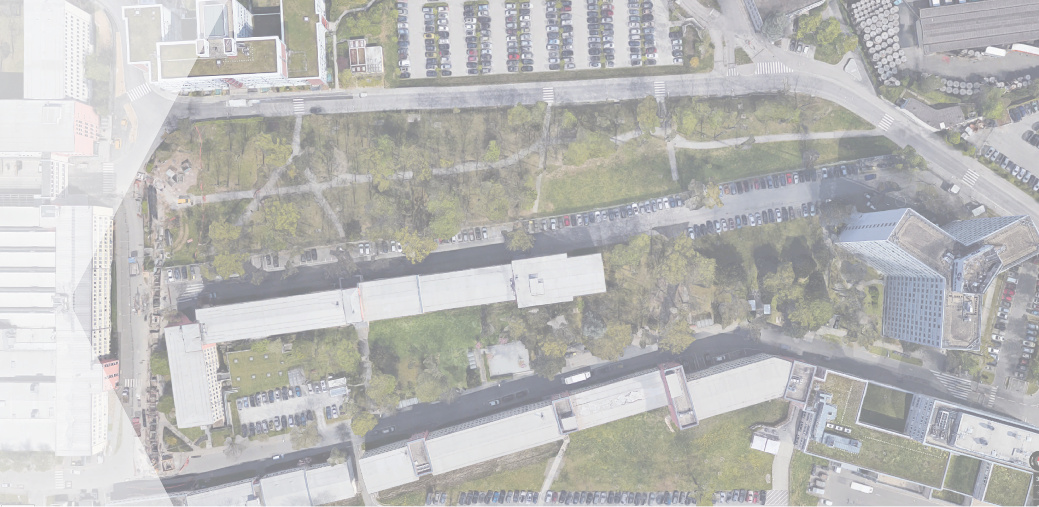}};
        \draw (0,0) node[below right] {\includegraphics[width=0.9\columnwidth]{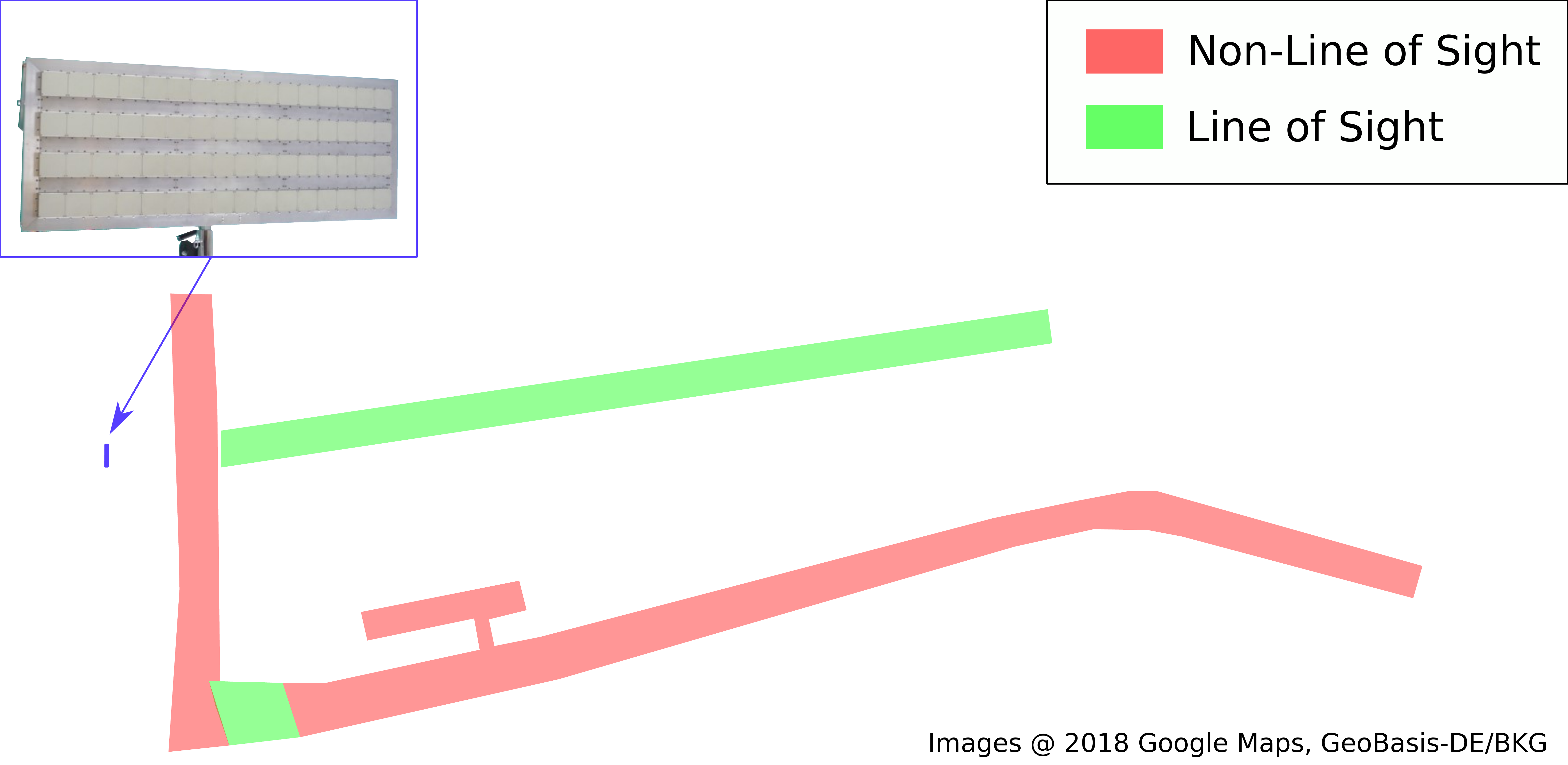}};
    \end{tikzpicture}
    \caption{Measurement setup on the Nokia campus in Stuttgart, Germany.}
    \label{fig:meas_campaign}
\end{figure}

\section{Mixture of Factor Analyzers}\label{sec:mfa}
We start by briefly revising the \ac{fa} which is the basis for the \ac{mfa}. 
The generative model is given as
\begin{equation}
    \B h^{(L)} = \B W \B z + \B u
    \label{eq:generator_fa}
\end{equation}
where $\B W\in \mathbb{C}^{N\times L}$ is the so called factor loading matrix, $\B z \in \mathbb{C}^{L}$ is the latent variable, and $\B u\sim \mathcal{N}_{\mathbb{C}}(\B 0, \B \Psi)$ is an additive term. 
The prior distribution is assumed to be standard Gaussian, i.e., $p(\B z)= \mathcal{N}_\mathbb{C}(\B z; \B 0, \eye)$. The key assumptions for this model are that the latent variable $\B z$ is low-dimensional, i.e., $L< N$ holds, and that the covariance $\B \Psi$ is diagonal. 
The rationale is that the data are modeled on a linear subspace described by $\B W$, which explains the common factors/features and their correlations, and the additional term $\B u$ accounts for both unique factors and noise in the data \cite[Ch. 12]{Murphy2012}. Note that if $\B \Psi = \psi^2\eye$ or $\B \Psi = \B 0$, the \ac{fa} degenerates to the (probabilistic) \ac{pca}. Moreover, the \ac{fa} is a low-rank parameterization of a Gaussian, i.e., $\B h^{(L)} \sim \mathcal{N}_{\mathbb{C}}(\B 0, \B W\B W\h + \B \Psi)$, cf. \cite[Ch. 12]{Murphy2012}.

Since the linearity of the latent space and the Gaussian assumption of the data is too restrictive for modeling real-world communication scenarios, we aim for a more powerful generative model which is realized by the \ac{mfa}. Thereby, a mixture of $K$ linear subspaces is considered which yields the following generative model for the $k$th mixture component:
\begin{align}
    \B h^{(K,L)} \mid k = \B W_k \B z + \B u_k + \B \mu_k
    \label{eq:generator_mfa}
\end{align}
where $\B W_k \in \mathbb{C}^{N\times L}$ is the factor loading matrix of mixture $k$, $\B u_k\sim \mathcal{N}_{\mathbb{C}}(\B 0, \B \Psi_k)$, and $\B \mu_k$ is the mean of mixture $k$.
The resulting generative model then follows the distribution
\begin{align}
    p^{(K,L)}(\B h) = \sum_{k=1}^K \int p(\B h \mid \B z,k) p(\B z\mid k) p(k) \op d \B z
    \label{eq:full_dist}
\end{align}
where $ p(\B z |k) = p(\B z) = \mathcal{N}_{\mathbb{C}}(\B 0, \eye)$
and $p(k)$ follows a categorical distribution \cite[Ch. 12]{Murphy2012}.
An important property for our considerations is that, given the latent variables, the distribution is conditionally Gaussian, i.e.,
\begin{align}
    p^{(K,L)}(\B h \mid \B z, k) &= \mathcal{N}_{\mathbb{C}}(\B h; \B \mu_k + \B W_k \B z, \B \Psi_k).
\end{align}
By integrating out the latent variable $\B z$ in \eqref{eq:full_dist}, one can interpret the \ac{mfa} as a \ac{gmm} with low-rank structured covariances, i.e.,
\begin{align}
    p^{(K,L)}(\B h) = \sum_{k=1}^Kp(k)\mathcal{N}_{\mathbb{C}}(\B h; \B \mu_k, \B W_k\B W_k\h + \B \Psi_k).
    \label{eq:gmm}
\end{align}
This model has generally much less parameters than a \ac{gmm} with full covariances since $L<N$.
Note that there exist different restrictions on the diagonal covariance matrix $\B \Psi_k$, e.g., it is possible to choose a common matrix for all components as $\B \Psi_k = \B \Psi,~\forall k\in\{1,\dots,K\}$, or to choose a scaled identity $\B \Psi_k = \psi_k^2\eye$, yielding different degrees of freedom.

Given a training dataset $\mathcal{H}$, an \ac{em} algorithm can be used to fit the parameters of the \ac{mfa} model \cite{em_mfa}. Interestingly, the \ac{mfa} model meets all formal requirements for the universal approximation property from \cite{NgNgChMc20} in the sense that, for an infinite number of mixture components, any continuous \ac{pdf} can be asymptotically approximated arbitrarily well by means of \ac{mfa} as 
\begin{align}
    \lim_{K\to \infty} \|p(\B h) - p^{(K,L)}(\B h)\|_\infty = 0.
    \label{eq:univ_approx}
\end{align}
It is worth noting that this result holds for an any number $L$ of latent dimensions which is a powerful basis for our consideration of learning a tractable approximation of the underlying channel distribution by means of the \ac{mfa}. In view of this property, in this work we investigate the asymptotic behavior of the \ac{mfa}-based channel estimator and the relationship between the number of latent dimensions $L$ and the number of mixture components $K$ for a given number of training samples $T$.

\section{Channel Estimation}\label{sec:ch_est}
In this section, we derive an approximation of the generally intractable \ac{mmse} estimator via the \ac{mfa} model.
The \ac{mse}-optimal estimator for an arbitrary channel distribution is the \ac{cme} which is given as
\begin{align}
    \hat{\B h}_{\text{CME}}(\B y) =  \op E [\B h \mid \B y] = \int \B h p(\B h \mid \B y) \op d \B h.
    \label{eq:cme}
\end{align}
Note that this estimator cannot be computed generally since the conditional distribution $p(\B h\mid \B y)$ is unknown. Even if the channel distribution $p(\B h)$ would be known, computing the \ac{cme} via the integral is intractable in real-time systems. To this end, one needs tractable expressions of the involved distributions such that the resulting estimator can be computed with a manageable complexity.

\subsection{MFA-based Channel Estimator}\label{subsec:ch_est}
We aim to find a tractable expression of the \ac{cme} by using the learned \ac{mfa} model and introduce the discrete latent variable via the law of total expectation:
\begin{align}
    \hat{\B h}^{(K,L)}(\B y) = \op E [\B h^{(K,L)} \mid \B y] &= \op E \left[\op E [\B h^{(K,L)} \mid \B y,k] \mid \B y \right].
    \label{eq:total_exp1}
    %\\
    %&= \int \B h p^{(K,L)}(\B h \mid \B y) \op d \B h
    %\\
    %&= \op E_k \left[\op E_{\B z} \left[\op E [\B h \mid \B y,\B z, k] \mid \B y, k \right] \mid \B y\right].
    %\label{eq:total_exp2}
\end{align}
%Since we can integrate out the continuous latent variable $\B z$ and view the \ac{mfa} model effectively as a constrained \ac{gmm} as shown in \eqref{eq:gmm}, the inner expectation is taken over a single Gaussian for which the \ac{cme} is linear and has the form, cf. \cite{9842343},
We note that, since conditioned on the component $k$ the model is Gaussian, i.e., $\B h^{(K,L)}\mid k \sim \mathcal{N}_{\mathbb{C}}(\B \mu_k, \B W_k\B W_k\h + \B \Psi_k)$, cf. \eqref{eq:gmm}, we can use the well-known \ac{lmmse} formula to solve the inner expectation in closed form, which yields
\begin{equation}
    \begin{aligned}
        \op E [\B h^{(K,L)} \mid \B y,k] &= \B \mu_k + (\B W_k\B W_k\h + \B \Psi_k)
        \\
        &\left(\B W_k\B W_k\h + \B \Psi_k + \sigma^2\eye\right)\inv(\B y - \B \mu_k).
    \end{aligned}
    \label{eq:lmmse}
\end{equation}
%    \op E [\B h \mid \B y,k] = \B \mu_k + (\B W_k\B W_k\h + \B \Psi)\left(\B W_k\B W_k\h + \B \Psi + \sigma^2\eye\right)\inv(\B y - \B \mu_k)
%    \label{eq:lmmse}
%\end{align}

The outer expectation in \eqref{eq:total_exp1} is then, due to the discrete nature of the latent variable, a convex combination of the linear filters from \eqref{eq:lmmse}, given as
\begin{align}
    \hat{\B h}^{(K,L)}(\B y) = \sum_{k=1}^K  p(k \mid \B y) \op E[\B h^{(K,L)} \mid \B y,k]
    \label{eq:mfa_estimator}
\end{align}
where $p(k\mid \B y)$ is the responsibility of the $k$th component for the pilot observation $\B y$ which is computed as
\begin{align}
    p(k \mid \B y) = \frac{p(k)\mathcal{N}_{\mathbb{C}}(\B y; \B \mu_k, \B W_k\B W_k\h + \B \Psi_k + \sigma^2\eye) }{\sum_{i=1}^K p(i)\mathcal{N}_{\mathbb{C}}(\B y; \B \mu_i, \B W_i\B W_i\h + \B \Psi_i + \sigma^2\eye)}.
    \label{eq:resp}
\end{align}

In the next subsections, we investigate the asymptotic behavior of the estimator for an increasing number of mixture components and discuss the complexity and memory requirements of the estimator based on the low-rank structure.

\subsection{Asymptotic Optimality}\label{subsec:optimality}
In \cite[Theorem 2]{9842343}, it is shown that the \ac{cme} approximation via an estimator based on a \ac{gmm} which is learned on the underlying channel distribution is asymptotically converging to the true \ac{cme} for large numbers $K$ of mixture components. 
The key prerequisite for this result is the convergence of the approximate distribution to the true channel distribution as a consequence of the universal approximation property of the \ac{gmm}. 
Due to space limitations we do not restate \cite[Theorem 2]{9842343} in this work.
Building on this result, we can state the asymptotic optimality of the \ac{mfa}-based channel estimator \eqref{eq:mfa_estimator} as a direct consequence of \cite[Theorem 2]{9842343} by using the universal approximation property.
\begin{corollary}\label{cor:optimality}
    Let $p(\B h)$ be any continuous \ac{pdf} which vanishes at infinity. For an arbitrary number of latent dimensions $L$, the \ac{mfa}-based channel estimator \eqref{eq:mfa_estimator} converges to the true \ac{mse}-optimal \ac{cme} \eqref{eq:cme} in the sense that
    \begin{equation}
        \lim_{K\rightarrow \infty}\|\hat{\B h}_{\text{\normalfont CME}}(\B y) - \hat{\B h}^{(K,L)}(\B y)\| = 0
    \end{equation}
    holds for any given $\B y$.
\end{corollary}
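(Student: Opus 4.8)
The plan is to reduce the claim to the already-established convergence result \cite[Theorem 2]{9842343} for \ac{gmm}-based estimators, so that the only genuine task is to check that its hypotheses are met by the \ac{mfa} at a fixed latent dimension $L$. First I would exploit the marginalization identity \eqref{eq:gmm}: integrating out the continuous latent $\B z$ shows that $p^{(K,L)}(\B h)$ is \emph{exactly} a \ac{gmm} with means $\B \mu_k$ and covariances $\B C_k = \B W_k\B W_k\h + \B \Psi_k$. Because the observation model $\B y = \B h + \B n$ with $\B n\sim\mathcal{N}_{\C}(\B 0,\sigma^2\eye)$ is shared with the true system, the \ac{mfa} estimator \eqref{eq:mfa_estimator}---assembled from the \ac{lmmse} inner filters \eqref{eq:lmmse} and the responsibilities \eqref{eq:resp}---is nothing but the posterior mean $\op E_{p^{(K,L)}}[\B h\mid\B y]$ taken with respect to this \ac{gmm} prior. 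Structurally it is therefore identical to the \ac{gmm}-based \ac{cme} approximation analyzed in \cite[Theorem 2]{9842343}; the low-rank parameterization only constrains the admissible covariances $\B C_k$ and does not alter the form of the estimator.

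Next I would verify the single hypothesis required by that theorem, namely sup-norm convergence of the approximating prior to the true density. This is supplied verbatim by the universal approximation property \eqref{eq:univ_approx}, i.e.\ $\|p(\B h)-p^{(K,L)}(\B h)\|_\infty\to 0$ as $K\to\infty$, together with the corollary's standing assumption that $p(\B h)$ is continuous and vanishes at infinity. The decisive observation is that \eqref{eq:univ_approx} holds for \emph{any} fixed $L$, so the low-rank restriction $L<N$ costs nothing: the constrained sequence $\{p^{(K,L)}\}_K$ still converges to $p(\B h)$ in the exact sense demanded by \cite[Theorem 2]{9842343}. Invoking that theorem with this sequence then yields $\|\hat{\B h}_{\text{CME}}(\B y)-\hat{\B h}^{(K,L)}(\B y)\|\to 0$ for every fixed $\B y$, which is the claim.

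The main obstacle is conceptual rather than computational: one must be certain that imposing the low-rank structure does not interfere with the passage from sup-norm convergence of the priors to convergence of the induced posterior means. That passage---controlling the tails so that $\int\B h\,p^{(K,L)}(\B h\mid\B y)\,\op d\B h$ converges to $\int\B h\,p(\B h\mid\B y)\,\op d\B h$ by means of the shared, rapidly decaying \ac{awgn} likelihood and the vanishing-at-infinity assumption---is precisely what is carried out inside \cite[Theorem 2]{9842343}. Since the only extra ingredient here, the low-rank covariance constraint, is already absorbed by the $L$-independence of \eqref{eq:univ_approx}, the corollary follows as a direct specialization and no new estimate has to be produced.
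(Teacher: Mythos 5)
Your proposal is correct and takes essentially the same route as the paper: the paper's proof likewise reduces the claim to \cite[Theorem 2]{9842343} by invoking the universal approximation property \eqref{eq:univ_approx} of the \ac{mfa} (which holds for any fixed $L$, via \cite[Theorem 5]{NgNgChMc20}), after identifying the \ac{mfa} as a \ac{gmm} with low-rank covariances per \eqref{eq:gmm}. Your write-up merely spells out the intermediate verifications (that \eqref{eq:mfa_estimator} is exactly the \ac{gmm} posterior mean, and that the low-rank constraint does not affect the hypothesis) that the paper's one-line proof leaves implicit.
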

\begin{proof}
    The result is a direct consequence of \cite[Theorem 2]{9842343} by using the universal approximation ability \eqref{eq:univ_approx} of the \ac{mfa} model which follows from \cite[Theorem 5]{NgNgChMc20}.
\end{proof}

Corollary \ref{cor:optimality} shows the powerful abilities of the \ac{mfa}-based channel estimator in combination with the possibility to reduce the latent dimension $L$. That said, the practicability of the estimator for a fixed number $L$ of latent dimensions and a finite number $K$ of mixtures is yet to be investigated. We will therefore show simulation results that demonstrate the strong performance of the estimator for real-world data for different numbers of mixtures and latent dimensions in \Cref{sec:sim_results}.

\subsection{Baseline Channel Estimators}\label{subsec:baselines}
We first discuss non-data-based channel estimators, one of whom is the \ac{ls} estimator which simply computes $\hat{\B h}_{\text{LS}} = \B y$ in our case. 
Another technique is compressive sensing, where the channel is assumed to be (approximately) sparse such that we have $\B h \approx \B \Delta \B s$ for a sparse vector \( \mbs \in \C^M \).
The dictionary \( \B \Delta \in \C^{N\times M} \) is typically an oversampled \ac{dft} matrix \cite{AlLeHe15}.
A baseline algorithm is \ac{omp} \cite{PaReKr93} which recovers an estimate \( \hat{\mbs} \) of \( \mbs \) assuming \( \mby = \B \Delta \mbs + \mbn \) and estimates the channel as \( \hat{\B h}_{\text{OMP}} = \B \Delta \hat{\mbs} \) for which \ac{omp} needs to know the sparsity order. Since order estimation is a difficult problem, we use a genie-aided approach: \ac{omp} gets access to the true channel to choose the optimal sparsity. We employ the \ac{omp} algorithm with \( M = 4 N \).

%Two obvious baseline channel estimators are the \ac{ls} estimator which simply computes $\hat{\B h}^{\text{LS}} = \B y$ in our case and the linear \ac{mmse} formula based on the sample covariance
%matrix. For the latter, we use all $T$ training data from $\mathcal{H}$ to estimate a sample covariance matrix
%$\B C = \frac{1}{T}\sum_{t=1}^T \B h_t \B h_t\h$, which approximates the distribution of the whole communications scenario as Gaussian,
%and then estimate channels as $\hat{\B h}^{\text{LMMSE}} = \B C(\B C + \sigma^2 \eye)\inv \B y$.

Next, we investigate state-of-the-art data-based algorithms. An important baseline is the \ac{lmmse} formula based on the sample covariance matrix. For this, we use all $T$ training data from $\mathcal{H}$ to compute $\B C = \frac{1}{T}\sum_{t=1}^T \B h_t \B h_t\h$ and then estimate channels as $\hat{\B h}_{\text{LMMSE}} = \B C(\B C + \sigma^2 \eye)\inv \B y$.
A \ac{cnn}-based channel estimator was introduced in~\cite{NeWiUt18} whose architecture is derived via insights about the channel/system model. 
The activation function is the \ac{relu} and we use the \( 2N \times N \) truncated \ac{dft} matrix as input transform, cf. \cite[eq. (43)]{NeWiUt18}. Thereby, an independent \ac{cnn} is trained for each \ac{snr} value.
A related technique to the proposed \ac{mfa} approach is the \ac{gmm}-based channel estimator from \cite{9842343,10051921} which learns a \ac{gmm} of the form $p^{(K)}(\B h) = \sum_{k=1}^K p(k) \mathcal{N}_{\mathbb{C}} (\B h; \B \mu_k, \B C_k)$ via an \ac{em} algorithm to approximate the underlying channel distribution with generally full covariances $\B C_k$. 
In \cite{10051921}, the case of circulant and Toeplitz structured covariances is discussed where the underlying \ac{em} algorithm is adapted to have covariances of the form $\B C_k = \B Q\h \diag(\B c_k) \B Q$
%\begin{align}
%    \B C_k = \B Q\h \diag(\B c_k) \B Q
%\end{align}
where $\B Q$ is a (truncated) \ac{dft} matrix for the (Toeplitz) circulant case.
The resulting estimator then is of the form 
\begin{align}
    \hat{\B h}_{\text{GMM}} = \sum_{k=1}^K p(k\mid \B y) \left(\B \mu_k + \B C_k \B C_{\B y,k}\inv (\B y - \B \mu_k)\right)
\end{align}
where $\B C_{\B y,k} = \B C_k + \sigma^2\eye$ and $p(k\mid \B y)$ is the responsibility of component $k$ for pilot $\B y$, cf. \cite{9842343,10051921}. In \cite{9940363}, the \ac{gmm}-based estimator was already evaluated on measurement data.

\subsection{Memory and Complexity Analysis}\label{subsec:memory}

\begin{table}[t]
\renewcommand{\arraystretch}{1.5}
\begin{center}
    \begin{tabular}
    { |m{0.14\columnwidth}|m{0.25\columnwidth}|m{0.12\columnwidth}|m{0.12\columnwidth}|m{0.12\columnwidth}|@{}m{0pt}@{} }
     \hline
     %\noindent\parbox[c]{\hsize}{\textbf{Name}} & \textbf{Parameters} &\textbf{Example} 
     \textbf{Name} & \textbf{Parameters} & $L=2$ & $L=8$ & $L=16$
     &\\
     \hline
    MFA &$K(LN + N + 2)$
     & $ 1.24\cdot 10^{4}$  & $ 3.67\cdot 10^{4}$  & $ 6.98\cdot 10^{4}$
     &\\
    \hline
     GMM full  &$K(\frac{1}{2}N^2 + 2N + 1)$
     & \multicolumn{3}{c|}{$1.39\cdot 10^{5}$}
     &\\
    \hline
    GMM toep &$K(5N + 1) $ 
    & \multicolumn{3}{c|}{$2.05\cdot 10^{4}$}
    & \\
    \hline
    GMM circ &$K(2N + 1) $ 
    & \multicolumn{3}{c|}{$8.26\cdot 10^{3}$}
    & \\
    \hline
    \end{tabular}
\end{center}
\caption{Analysis of the number of parameters of the \ac{mfa} model as compared to a (structured) \ac{gmm} with example numbers for $K=N=64$.}
\label{tab:num_params}
%\vspace{-0.2cm}
\end{table}

In this subsection, we analyze the memory requirements and the computational (online) complexity of the \ac{mfa}-based channel estimator. We emphasize that the fitting of the parameters via the \ac{em} algorithm is done exclusively in an initial offline phase. In the online phase, the channel estimate in \eqref{eq:mfa_estimator} is computed for a given pilot observation $\B y$. We first note that the calculation of the $K$ \ac{lmmse} filters in \eqref{eq:lmmse} as well as of the $K$ responsibilities in \eqref{eq:resp} can be parallelized which is of great importance in real-time systems. 
Furthermore, due to the specific structure of the parameterized covariances, the inverse that appears in \eqref{eq:lmmse} and in \eqref{eq:resp}---for evaluating the Gaussian density---can be computed less expensively by means of the inversion lemma as 
\begin{align}
    (\B W_k \B W_k\h + \B \Psi_k + \sigma^2 \eye)\inv = \B D_k - \B D_k \B W_k \B A_k \B W_k\h \B D_k
\end{align}
where $\B D_k = (\B \Psi_k + \sigma^2 \eye)\inv$ is a diagonal matrix and $\B A_k = (\eye + \B W_k\h \B D_k \B W_k)\inv$ is an $L\times L$ matrix of lower dimension. Thus, the overall order of complexity of the channel estimator is $\mathcal{O}(K(N^2 + NL^2))$ when taking the computation of the matrix products into account. However, since the \ac{lmmse} filters are fixed for a given \ac{snr} value, they can be pre-computed such that only matrix-vector products have to be evaluated. In this case, the complexity reduces to $\mathcal{O}(KN^2)$ which can be computed in $K$ parallel processes.

The memory requirement is determined by the number of parameters for $\left\{\B W_k, \B \Psi_k, \B \mu_k, p(k)\right\}_{k=1}^K$, which depends on the choice of the diagonal covariances $\B \Psi_k$. As discussed also later, the option of having scaled identities $\B \Psi_k = \psi_k^2\eye,~\forall k\in\{1,\dots,K\}$ saves memory overhead and does not affect the channel estimation performance. In this case, the total number of parameters is $K(LN + N +2)$. In Table \ref{tab:num_params}, we compare the number of parameters with the related \ac{gmm} estimator from \cite{9842343,10051921} where we exemplarily depict the number of parameters for a setting with $K=N=64$ for different numbers $L$ of latent dimensions. This particular setting is evaluated in \Cref{sec:sim_results} in terms of the channel estimation performance. 
In comparison to the \ac{gmm} estimator with full covariances, the number of parameters is drastically reduced, which is beneficial against overfitting as shown later. 
Although the \ac{gmm} with structured covariances can have an even lower number of parameters, the performance in this case might suffer from too restrictive assumptions on the underlying structure which is also verified by the simulation results. 
Altogether, the \ac{mfa} model allows for an adaptivity in the number of parameters which reflects a trade-off between memory overhead, computational complexity, and estimation performance which can be conveniently optimized for a certain application scenario.

\section{Simulation Results}\label{sec:sim_results}
We conducted numerical experiments for the data from the measurement campaign as discussed in \Cref{sec:system_model} with $N=64$ antennas. The channels are normalized such that $\op E[\|\B h\|_2^2] = N$, and the SNR is defined as $1/\sigma^2$. The \ac{mse} between the true and estimated channels is normalized by $N$. For the \ac{mfa} model we restrict the covariances to $\B \Psi_k = \psi_k^2\eye~\forall k$. All data-based approaches are trained on the same dataset $\mathcal{H}$.

\begin{figure}[t]
	\centering
	\includegraphics[width=0.95\columnwidth]{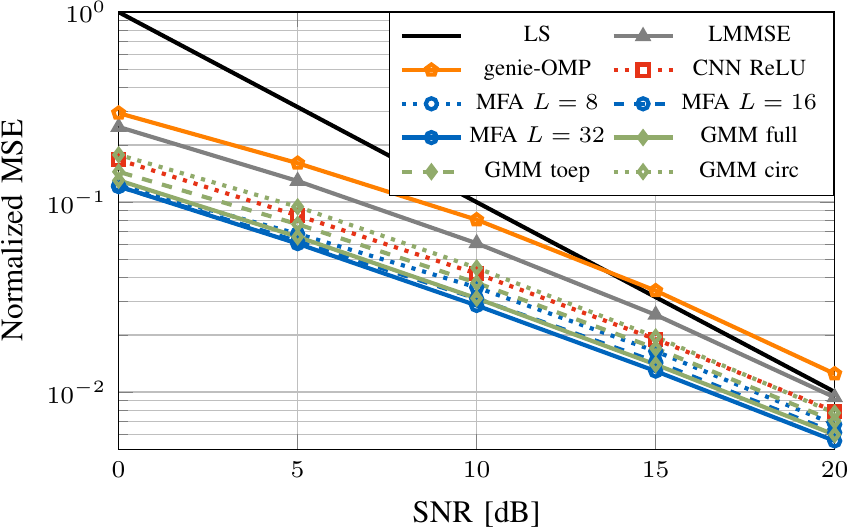}
	\includegraphics[width=0.95\columnwidth]{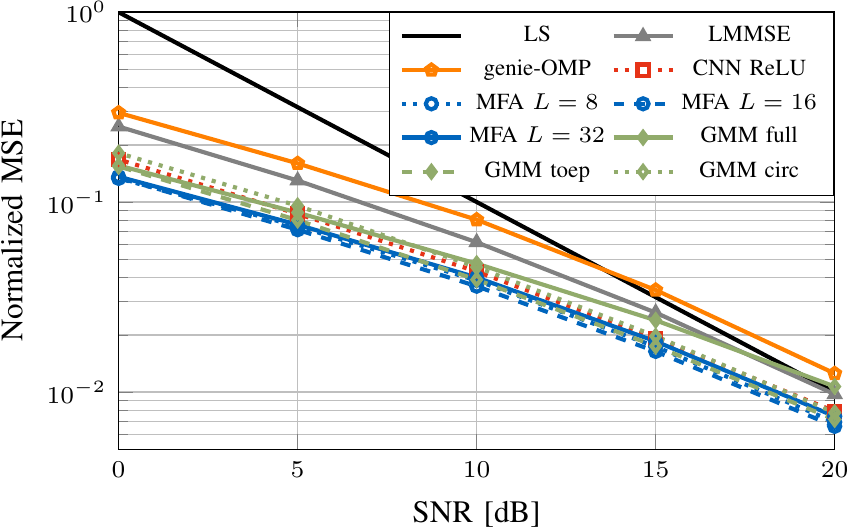}
	\caption{\ac{mse} performance for $T=100{,}000$ (top) and $T=10{,}000$ (bottom) training samples for $K=64$ components.}
	\label{fig:snr}
\end{figure}

Fig. \ref{fig:snr} shows the \ac{mse} performance of the \ac{mfa}-based channel estimator with $K=64$ components in comparison with the baseline estimators introduced in \Cref{subsec:baselines} (the \ac{gmm} variants also have $K=64$ components) for $T=100{,}000$ (top) and $T=10{,}000$ (bottom) training samples. We observe that the approaches ``\ac{ls}'', ``\ac{lmmse}'', and ``genie-\ac{omp}'' as introduced in \Cref{subsec:baselines} do not vary in the performance between both cases. 
The compressive sensing approach genie-\ac{omp}, although having genie knowledge of the sparsity order, does not perform well which indicates that the sparsity assumption in the \ac{dft} dictionary is too restrictive for the measurement data, cf. \cite{9940363}. 
The \ac{cnn}-based channel estimator, labelled ``CNN ReLU'', cf. \Cref{subsec:baselines}, performs better, but is almost always worse than the \ac{gmm}- and \ac{mfa}-based channel estimators, highlighting their strong performance.

For the case of $T=100{,}000$ training samples in Fig. \ref{fig:snr} (top), the \ac{gmm}-based estimator with full covariances (``GMM full'') performs better than the Toeplitz (``GMM toep'') or circulant (``GMM circ'') structured versions, however, it is outperformed by the \ac{mfa}-based estimator with $L=32$ latent dimensions over the whole \ac{snr} range. Reducing the latent dimension further leads to a worse \ac{mse}, but the performance gap is small, especially in the low \ac{snr} regime, and the structured \ac{gmm} variants are still outperformed for all \acp{snr}. 

In Fig. \ref{fig:snr} (bottom), the case of $T=10{,}000$ training samples is shown where the \ac{gmm}-based estimator with full covariances suffers from overfitting, especially in the high \ac{snr} regime, cf. \cite{9842343}. Although the structured \ac{gmm} variants are less prone to overfitting, they are outperformed by the \ac{mfa}-based estimator with $L=8$ or $L=16$ latent dimensions over the whole \ac{snr} range. Increasing the latent dimension to $L=32$ leads to a slightly worse performance due to overfitting.
Altogether, these results highlight the importance of the adaptive nature of the proposed \ac{mfa}-based estimator, which allows to choose a suitable latent dimension for a limited amount of training data.

\begin{figure}[t]
	\centering
	\includegraphics[width=0.95\columnwidth]{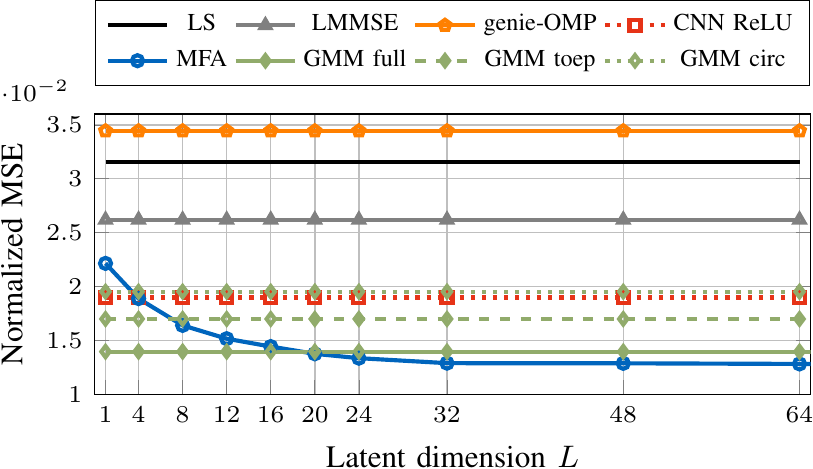}
	
	\vspace{-0.2cm}
	
	\includegraphics[width=0.95\columnwidth]{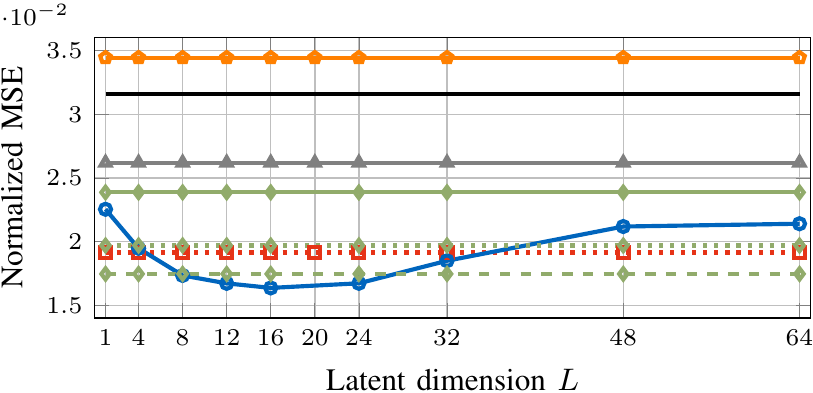}
	
	\vspace{0.2cm}
	
	\includegraphics[width=0.95\columnwidth]{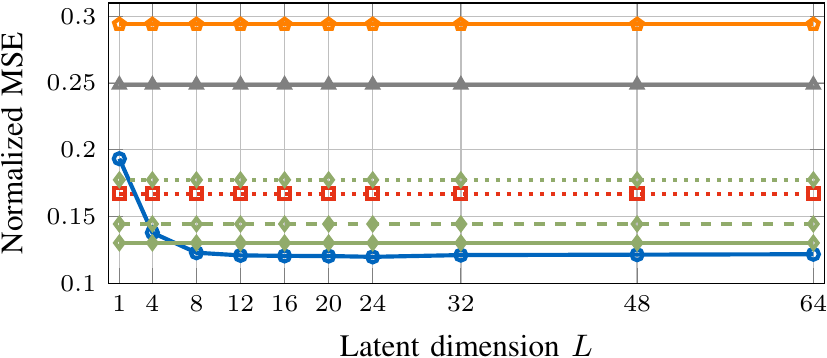}
	\caption{\ac{mse} performance over the number $L$ of latent dimensions for $K=64$ components. Top: $T=100{,}000$ and $\text{SNR} = \SI{15}{dB}$. Middle: $T=10{,}000$ and $\text{SNR} = \SI{15}{dB}$. Bottom: $T=100{,}000$ and $\text{SNR} = \SI{0}{dB}$.}
	\label{fig:latents_15dB}
\end{figure}

In Fig. \ref{fig:latents_15dB}, we further investigate the impact of the latent dimension $L$ on the performance for $K=64$ mixture components and for different numbers of training samples and \ac{snr} values. We compare the results with the baseline estimators which have a constant behavior with respect to the latent dimensions.
In the top plot, the case of $T=100{,}000$ and $\text{SNR} = \SI{15}{dB}$ is evaluated in which the \ac{mse} of the \ac{mfa}-based estimator is consistently decreasing for higher latent dimensions, but with a saturation above $L=32$. The approaches \ac{ls}, \ac{lmmse}, and genie-\ac{omp} are outperformed already with a single latent dimension.
All \ac{gmm}-based variants and the \ac{cnn} estimator are outperformed for $L=20$ latent dimensions or more.
The middle plot shows that for a small amount of training data $T=10{,}000$ and $\text{SNR} = \SI{15}{dB}$, the unstructured \ac{gmm}-based estimator performs even worse than the \ac{mfa} model with only a single latent dimension. Furthermore, increasing the latent dimension up to $L=16$ yields a consistently better \ac{mse} performance, whereas a degradation can be observed for higher latent dimensions due to overfitting effects. In addition, even the structured \ac{gmm}-based variants and the \ac{cnn} estimator are outperformed for $L\in[8,24]$ latent dimensions. 
In the bottom plot, a lower \ac{snr} value of \SI{0}{dB} for $T=100{,}000$ training samples is evaluated. The plot indicates that for lower \ac{snr} values, generally less latent dimensions are necessary since a saturation already occurs above $L=12$. All baseline estimators are outperformed with $L=8$ or more latent dimensions.

\begin{figure}[t]
	\centering
	\includegraphics[width=0.95\columnwidth]{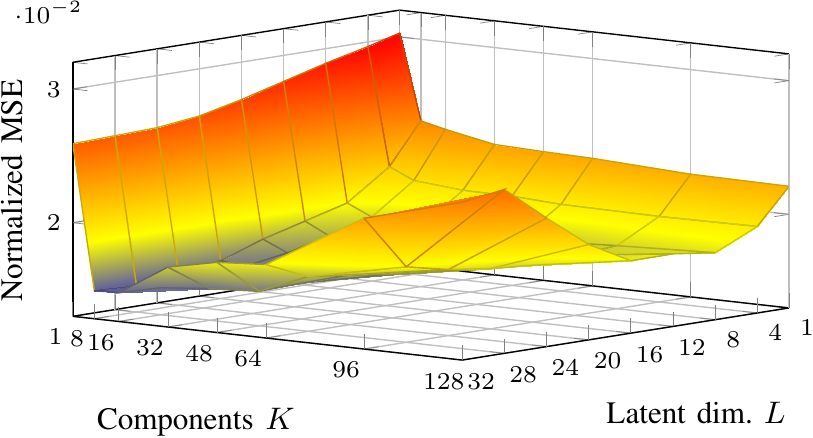}
	\includegraphics[width=0.95\columnwidth]{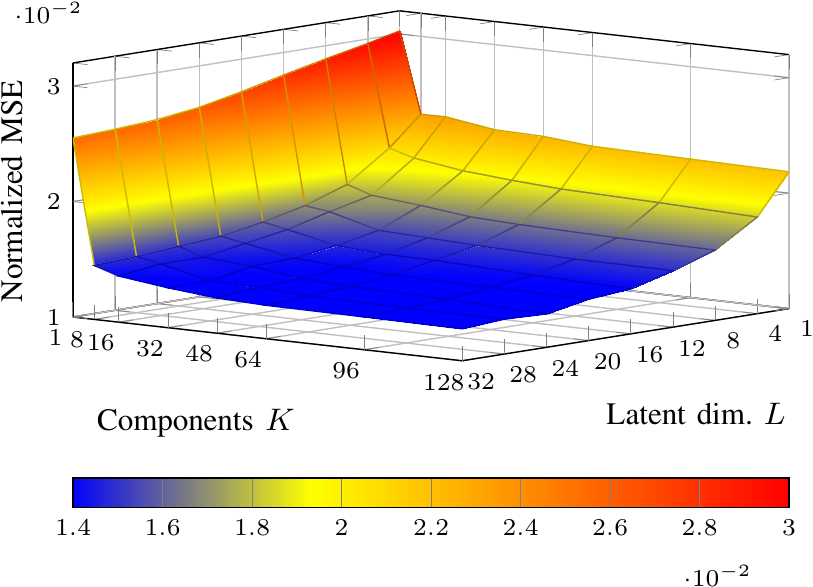}
	\caption{\ac{mse} performance over the number of latent dimensions $L$ and mixture components $K$ for $T=10{,}000$ (top) and $T=100{,}000$ (bottom) training samples for an \ac{snr} of \SI{15}{dB}.}
	\label{fig:latents_comps_3d}
\end{figure}

In Fig. \ref{fig:latents_comps_3d}, the combined impact of both the latent dimensions $L\in[1,32]$ and mixture components $K\in[1,128]$ on the \ac{mse} of the \ac{mfa} model is shown for $\text{SNR} = \SI{15}{dB}$. The case of $K=1$ refers to the \ac{fa} model as described in \Cref{sec:mfa}. In the top plot for $T=10{,}000$ it can be observed that the resulting \ac{mse} is minimal at $(K,L) = (8,28)$ and it increases for higher numbers of latent dimensions and mixture components as expected due to overfitting. In contrast, for $T=100{,}000$ training samples, the \ac{mse} is consistently decreasing for increasing latent dimensions and mixture components with an overall saturation effect. This analysis validates the asymptotic analysis up to a certain extent, since even if the latent dimension is fixed, the \ac{mse} can be further decreased by increasing the number of mixture components $K$. We note that this behavior inherently depends on the available amount of training data.

\section{Conclusion and Outlook}
In this work, we have proposed to employ the \ac{mfa} model for channel estimation which is asymptotically optimal in the sense that the convergence to the \ac{mse}-optimal \ac{cme} is guaranteed. Thereby, the underlying (unknown) channel distribution of a whole communication scenario is learned offline by fitting the \ac{mfa} parameters. 
Afterwards, the \ac{mfa} model is leveraged for online channel estimation where especially the number of parameters and computational complexity is reduced due to the inherent low-rank structure. Simulation results based on measurement data showed great estimation performances.

In future work, we plan to further investigate the \ac{mfa} model for physical layer applications. Particularly interesting is to utilize the subspace information that is provided by the \ac{mfa}, e.g., for interference channels, feedback applications \cite{turan2023versatile}, or in \ac{ris}-aided systems \cite{fesl2023channel}. In addition, we want to adapt the approach to deal with imperfect training data and apply it to channels at higher frequencies.

\bibliographystyle{IEEEtran}
\bibliography{IEEEabrv,biblio}

\end{document}